\theoremstyle{plain}
\newtheorem{thm}{Theorem}[subsection]
\newtheorem{lma}[thm]{Lemma}
\newtheorem{obs}[thm]{Observation}
\theoremstyle{definition}
\newcommand{\COL}[1]{$(2,#1)$-$\mathcal{COL}$}
\newcommand{\SAT}{$\mathcal{SAT}$}
\newcommand{\NAESAT}{$\mathcal{NAE}$-$\mathcal{SAT}$}
\newcommand{\mCOL}[1]{(2,#1)\text{-}\mathcal{COL}}
\newcommand{\mNAESAT}{\mathcal{NAE}\text{-}\mathcal{SAT}}
\renewcommand\thesection{\Roman{section}}
\titleformat{\section}[block]{\large\scshape\centering}{\thesection.}{1em}{}
\titleformat{\subsection}[block]{\large}{\thesubsection.}{1em}{}
\title{\vspace{-15mm}\fontsize{24pt}{10pt}\selectfont\textbf{Vertex 2-coloring without monochromatic cycles}}
\author{
\large
\textsc{Micha\l~Karpi\'nski}\thanks{PhD student (at the time of writing)}\\[2mm]
\normalsize University of Wroclaw \\
\normalsize karp@cs.uni.wroc.pl
\vspace{-5mm}
}
\date{}
\begin{document}

\maketitle % Insert title

\thispagestyle{fancy} % All pages have headers and footers

\begin{abstract}

  \noindent In this paper we study a problem of vertex two-coloring of undirected graph such that
  there is no monochromatic cycle of given length. We show that this problem is hard to solve. We give a proof
  by presenting a reduction from variation of satisfiability (SAT) problem. We show nice properties of coloring cliques with two colors
  which plays pivotal role in the reduction construction. 
\end{abstract}

\begin{multicols}{2} % Two-column layout throughout the main article text

  %-------

  \section{Introduction}

  Vertex coloring problems (VCP) have been studied extensively since the inception of graph theory.
  In classical form, problem of $k$-coloring a graph is stated like this: can we color vertices of a graph using $k$
  different colors, so that no neighbouring vertices have the same color? It is known that this problem is NP-complete \cite{vcp1}.
  VCPs have received much attention in the literature not only for its theoretical aspects and difficulty from the computational
  point of view, but also for its real world applications, for example in: scheduling \cite{vcp2},
  timetabling \cite{vcp3}, register allocation \cite{vcp4}, train platforming \cite{vcp5},
  frequency assignment \cite{vcp6}, communication networks \cite{vcp7} and many other engineering fields.

  In this paper we study a variation of the coloring problem. Using only two colors we want to color the vertices, so that
  there is no monochromatic cycle of given length. There have been some research in solving a slightly different problem: is there
  a 2-coloring such that there exists no monochromatic cycles (of any length). This problem can be viewed as partitioning a graph into
  two induces forests and it is known to be NP-complete \cite{2col1} for directed graphs.
  Another result worth mentioning is by Nobinon et al. \cite{2col2} where authors show that this problem is NP-complete
  even for oriented graphs. They also give implementation of three exact algorithms and some inapproximability results. The motivation
  to study this class of problems lies in economics -- 2-coloring without monochromatic cycles can be used in the study of rationality
  of consumption behavior.

  Many more papers have been written on subject of acyclic coloring (or partitioning). Papers relevant to ours include (among many others):
  \cite{2col3}, \cite{2col4}, \cite{2col5}.
  
  The rest of the paper is organized in the following way: in section 2 we define notation used in this paper, we also give
  definitions of studied problems and we state the main theorem. In section 3 and 4 we prove the hardess of our coloring problem
  for cycles of small length (3, and 4). Later, in section 5 we generalize the ideas used in previous sections to prove the main theorem.
  We end the paper with some conclusions and we show perspectives for future work.
  
  \section{Preliminaries}

  The purpose of this section is to introduce reader to notation used in later chapters as
  well as definitions of studied problems. Let $G=(V,E)$ be an undirected, unweighted graph.
  The {\em cycle} in $G$ is a vertex disjoint, closed, simple path in $G$. We denote $\mathcal{C}_k$ to be a set of
  all cycles in $G$ of length $k$.
  Let $c: V \rightarrow \{r,b\}$ be a mapping that for each vertex in $V$ assigns one of two colors ({\em red} or {\em blue}).
  We will call any such $c$: the {\em coloring} of graph $G$. Furthermore, we will say
  that given coloring $c$ is {\em valid}, if a certian predicate $P(c)$ is true. Let $K_n$ be a clique of size $n$, that is: a
  graph with $n$ vertices in which every vertex is connected by an edge to any other vertex.

  Let \COL{k} be the decision problem of whether there exists a valid 2-coloring for given graph.
  We give the validity predicate $P_k(c)$ below. It is true only if the coloring $c$ does not contain any cycles of size $k$ with
  vertices of the same color.

  \[
    P_k(c) \equiv \forall Q \in \mathcal{C}_k \exists u,v \in Q \quad c(u) \neq c(v) 
  \]
  
  Formally, our problem can be expressed as:

  \[
    (2,k)\text{-}\mathcal{COL} = \{G \, : \, \exists c \, P_k(c)\}
  \]

  We are interested in knowing how hard is the question, whether given graph $G$ belongs to \COL{k}.
  In the next two sections we study the simplest variants, that is when $k=3$ and $k=4$. Cycle of size three we
  call a {\em triangle}, and of size four: a {\em square}.

  Let \SAT denote the classical boolean satisfiability problem. Namely, it is the set of all boolean formulas
  in CNF (conjunctive normal form)
  for which there exists a truth assignment that satisfies it. It is known that this problem is NP-complete \cite{cook}.
  It is also known that
  a certain variation of \SAT called \NAESAT (not-all-equal SAT) is NP-complete \cite{naesat}. In this variation we impose additional
  constraint on the satisfing assignment: each clause has at least one literal that is true, and at least one that is false.
  We denote $k$-\SAT and $k$-\NAESAT (for $k \geq 3$) to be subsets of \SAT and \NAESAT where each clause in given formula has at most $k$
  literals (it's in kCNF). For $k < 3$ for both problems there exists polynomial time algorithms that solves them.

  We are ready to state the main theorem:

  \begin{thm}
    \label{thm:main}
    For any integer $k \geq 3$, \COL{k} is NP-complete.
  \end{thm}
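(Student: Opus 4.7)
My plan is to prove Theorem~\ref{thm:main} by reducing from an appropriate variant of satisfiability; the natural candidate is $\mNAESAT$ (or $k$-$\mNAESAT$), because the $r/b$ swap symmetry of valid colorings mirrors the \emph{true}/\emph{false} symmetry of NAE-satisfaction. Membership in NP is immediate: for fixed $k$, one can enumerate all $O(n^k)$ potential $k$-cycles and, given a 2-coloring, check each for monochromaticity in polynomial time. The bulk of the work lies in the hardness reduction.

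The central idea is a clique-based rigidity observation. Since a monochromatic copy of $K_k$ contains $k$-cycles, every color class in a valid coloring of $K_n$ has size at most $k-1$. Hence $K_{2k-1}$ has no valid coloring at all, while $K_{2k-2}$ admits only the balanced colorings with exactly $k-1$ vertices of each color. This rigidity is the workhorse: a $K_{2k-2}$ attached to designated ``output'' vertices forces strong structural constraints on their colors, and this is exactly what the paper alludes to when it mentions ``nice properties of coloring cliques with two colors.''

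Using these blocks, I would construct, for a CNF formula $\phi$, a graph $G_\phi$ as follows. Each variable $x_i$ is encoded by two distinguished vertices $v_i, \bar{v}_i$ connected through auxiliary clique structure that forces them to receive opposite colors, so that (say) red means \emph{true}. Each clause $C_j$ is encoded by a gadget attached to the literal vertices occurring in $C_j$, designed so that the coloring extends to the gadget if and only if not all of those literals are the same color, i.e.\ exactly when $C_j$ is NAE-satisfied. Following the organization of the paper, I would carry out the construction concretely for $k=3$ (with $K_4$ as the rigid block, forcing a $2$-$2$ split) and $k=4$ (with $K_6$, forcing $3$-$3$), and then extract the pattern: in the general case, replace the rigid clique by $K_{2k-2}$ and scale the connecting subgraphs so that the only short cycles live entirely inside a single gadget.

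The principal obstacle will be \emph{non-locality}: even if each gadget is individually well-designed, two or more gadgets sharing vertices may close a spurious $k$-cycle that passes through several of them, imposing unintended color constraints and breaking the correspondence with $\phi$. To control this I would lengthen the inter-gadget connections (via paths whose lengths depend on $k$) so that any cycle traversing more than one gadget has length strictly greater than $k$, thereby ruling out such interference. Verifying this, together with the two directions of the equivalence $G_\phi \in \mCOL{k}$ iff $\phi \in \mNAESAT$, is the technical heart of the argument and the step I expect to take the most care, especially in the generalization to arbitrary $k$.
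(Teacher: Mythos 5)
Your overall strategy coincides with the paper's: both reduce from $k$-\NAESAT, both encode a variable as a complementary pair of vertices forced to opposite colors and a clause as a $k$-cycle on its literal occurrences, and both extract rigidity from the fact that in a valid coloring of a clique each color class has size at most $k-1$, so that $K_{2k-2}$ ($K_4$ for $k=3$, $K_6$ for $k=4$) is forced into a balanced split. The genuine gap sits exactly where all of the paper's work is concentrated: you assert that ``auxiliary clique structure'' forces $v_i$ and $\bar v_i$ to receive opposite colors, but the balanced-split property alone cannot deliver this. Inside a single $K_{2k-2}$ any two designated vertices may perfectly well land in the same color class, so no one clique yields the required inequality edge. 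The paper's entire technical content is the construction of this ``super-edge'' gadget: for $k=3$ it chains $K_4$'s sharing edges into a loop of \emph{odd} length and derives a contradiction from the forced color alternation; for $k=4$ it builds a full binary tree of height $4$ out of $K_6$-components (each $K_6$ propagating a monochromatic pair to exactly one of its two children) and closes squares between the root and every leaf; for general $k$ it grows this tree to height $4\lfloor (k-1)/2\rfloor$ with designated cycle-inducing levels joined so that a forced alternating root-to-leaf path closes a monochromatic $k$-cycle. None of this is routine, and your proposal gives no indication of how the inequality gadget would be obtained, nor of why a valid coloring of it exists (which the paper must also verify, and which fails for some natural candidate sizes, e.g.\ the odd loop of length $3$ and the tree of height $2$).

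A second, concrete problem is your proposed cure for non-locality. Subdividing inter-gadget connections into long paths does prevent short cycles from crossing gadget boundaries, but in this problem a bare path imposes no constraint whatsoever on the colors of its endpoints --- a path contains no $k$-cycle --- so lengthening the connections severs the logical link between a variable's truth value and the colors of its occurrences in the clause gadgets, and the forward direction of the equivalence collapses. The connections must themselves be color-forcing, which is again precisely the super-edge gadget. The paper controls cross-gadget cycles not by making them long but by observing that any route between gadgets passes through a super-edge and therefore cannot remain monochromatic.
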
 

  In order to prove theorem \ref{thm:main}, we will prove the following theorem:

  \begin{thm}
    \label{thm:red}
    For any integer $k \geq 3$, there exists a computable function $f$, such that
    for any boolean formula $\phi$, $\phi \in k$-$\mathcal{NAE}$-$\mathcal{SAT}$ if and only if 
    $f(\phi) \in$\COL{k}.
  \end{thm}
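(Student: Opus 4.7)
The plan is to build, for each $k \geq 3$, a polynomial-time computable map $f$ from $k$-CNF formulas $\phi$ to graphs $f(\phi)$ using two kinds of components glued along distinguished terminal vertices: a \emph{variable gadget} $V_x$ with terminals $x_+, x_-$ for each variable $x$, and a \emph{clause gadget} $G_C$ on $k$ literal terminals for each clause $C$. Every occurrence of the positive literal $x$ in a clause is routed to $x_+$ and every occurrence of $\bar x$ to $x_-$. Without loss of generality I will assume each clause has exactly $k$ literals, padding shorter clauses in the standard way.

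The engine of the reduction is a preparatory observation about cliques that should be stated and proved first. Any $k$ vertices of $K_n$ (with $n \geq k$) induce a $K_k$ whose Hamilton cycles are $k$-cycles of $K_n$, so a 2-coloring of $K_n$ is valid iff no color class reaches size $k$. In particular, $K_{2k-2}$ has exactly the balanced colorings (with $k-1$ red and $k-1$ blue vertices) as its valid ones, and $K_{2k-1}$ has none — this single fact drives everything. The clause gadget then falls out of it: for each clause $C=(\ell_1 \vee \cdots \vee \ell_k)$ I add $k-2$ fresh auxiliary vertices to the $k$ literal terminals and put all edges among them to form a $K_{2k-2}$; balance forbids all $k$ literal terminals from sharing a color, which is precisely the NAE-satisfiability condition on $C$.

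The variable gadget $V_x$ is the substantive step: every valid 2-coloring must force $c(x_+) \neq c(x_-)$ using only the ``no monochromatic $k$-cycle'' constraint. My plan is to chain a small number of $K_{2k-2}$'s along carefully chosen overlaps. An ``equality'' gadget comes essentially for free from two $K_{2k-2}$'s sharing a common $K_{2k-3}$ on their auxiliaries — balance of each clique then pins the two non-shared vertices to the same color. To flip equality into negation, I plan to overlay an additional $K_{2k-2}$ whose balance equation is only simultaneously satisfiable with the others when $c(x_+)$ and $c(x_-)$ disagree. I would verify this explicitly for $k=3$ and $k=4$ in the dedicated sections and then uniformize the construction for general $k$ by using the extra slack inside each $K_{2k-2}$ as padding. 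Correctness is then routine in both directions: from an NAE-satisfying assignment $\alpha$ I color $x_+$ red iff $\alpha(x)$ is true, $x_-$ the opposite, and extend to the auxiliary vertices to balance every clique; conversely, from a valid 2-coloring $c$ of $f(\phi)$ I read $\alpha(x) = \text{true}$ iff $c(x_+)=r$, a reading that is consistent thanks to $V_x$ and NAE-satisfying thanks to the balance of every clause clique.

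The main obstacle I foresee is exactly the variable gadget. The ``no monochromatic $k$-cycle'' predicate is symmetric under swapping the two colors, and overlapping balanced cliques naturally propagate equality rather than inequality of colors along shared subcliques. Producing a gluing whose only balanced colorings force $c(x_+) \neq c(x_-)$ — uniformly for every $k \geq 3$ and without inadvertently creating a $K_{2k-1}$ that would kill all colorings — is the delicate combinatorial step. I expect $k=3$ to be the tightest: the valid colorings of $K_4$ are precisely the $(2,2)$-splits, which allow every pair of vertices to agree, leaving very little room to engineer the required asymmetry, whereas larger $k$ should accommodate the same template with auxiliary padding.
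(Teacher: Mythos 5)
Your core observation about cliques is correct and is in fact the natural generalization of the engine the paper uses: a $2$-coloring of a complete graph avoids monochromatic $k$-cycles exactly when no color class reaches size $k$, so $K_{2k-2}$ admits only balanced colorings and $K_{2k-1}$ admits none (the paper exploits precisely this for $K_4$ when $k=3$ and $K_6$ when $k=4$). Your clause gadget is sound, and your overall architecture (variable gadget forcing $c(x_+)\neq c(x_-)$, clause gadget forcing not-all-equal, read off the assignment from the color of $x_+$) matches the paper's. However, the proposal has two genuine gaps. First, the variable gadget --- which is the entire substance of the theorem --- is not constructed; you explicitly defer it, and the sketch you give is problematic. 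Overlapping balanced cliques naturally propagate \emph{equality} (as you note, two $K_{2k-2}$'s sharing a $K_{2k-3}$ pin their private vertices to the same color), and the obvious ways to convert this to inequality run into trouble: e.g., for $k=3$, adding the edge $x_+x_-$ on top of an equality gadget whose cliques share a triangle creates a $K_5$, killing all colorings; and a chain or loop of $K_{2k-2}$'s overlapping in $(k-1)$-sets forces only the \emph{count} of red vertices in each junction set, which for $k\geq 4$ does not pin down any individual edge and for even $k$ admits no consistent coloring around an odd loop at all. The paper itself had to abandon the chain template for $k\geq 4$ and switch to a tree of $K_6$-derived components with leaves wired back to the root, which is evidence that the uniformization you hope for is not routine.

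Second, your correctness argument only enforces balance \emph{inside} each clique, but monochromatic $k$-cycles can cross gadget boundaries, and in your construction they actually do. If two clauses share two literals (say $x$ and $z$, both positive), their clause cliques share the terminals $x_+$ and $z_+$; taking one private vertex $a$ from the first clique and one private vertex $b$ from the second, the $4$-cycle $x_+\,a\,z_+\,b$ exists and can be monochromatic even though both cliques are perfectly balanced, so for $k=4$ an NAE-satisfying assignment need not yield a valid coloring of $f(\phi)$. The forward direction therefore fails as stated. The paper avoids this by never letting gadgets share vertices: every literal occurrence is joined to its variable terminal through a \emph{super-edge} gadget whose distinguished edge is forced bichromatic, and a monochromatic cycle cannot traverse a bichromatic edge, so all cycle-checking localizes to single gadgets (and even there the paper must patch its coloring, cf.\ the monochromatic square found in its $\mathcal{P}2$ analysis). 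To repair your reduction you would need both to exhibit the inequality gadget explicitly for every $k$ and to insert such insulation between components, then re-verify the existence direction against cross-gadget cycles.
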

  
  \section{Two-coloring without monochromatic triangles}
  
  In this section we prove theorem \ref{thm:red} for $k=3$. Let $\phi$ be
  a boolean formula in 3CNF with $n$ variables $x_1,\ldots,x_n$ and $m$ cluses
  $C_1,\ldots,C_m$. We construct desired graph $G_{\phi}$ in the following way.
  Let us begin by showing an abstract form of $G_{\phi}$. The reduction consists of three gadgets: one for
  each variable, one for each clause, and one for each {\em super-edge}. The super-edge $\{u,v\}$ is an edge with a property, that
  any valid coloring $c$ implies that $c(u) \neq c(v)$. For starters, assume that we already have such edges at our disposal.
  This is how we would construct $G_{\phi}$: a gadget for variable $x$ consists of two vertices labeled $x$ and $\neg x$ connected by
  a super-edge. Gadget for clause $C=(u \vee v \vee w)$ consists of a triangle with vertices labeled $u,v$ and $w$. We connect each literal
  from variable gadget to its every occurrence in clause gadgets using super-edges. Example is given in figure \ref{fig:basic_red} for
  formula $\phi=(x_1 \vee \neg x_1 \vee x_2) \wedge ( \neg x_2 \vee x_3 \vee \neg x_3)$. Dashed lines represent super-edges. We prove
  that this is indeed the correct reduction.

  \begin{figure}[H]
    \begin{center}
      \includegraphics[scale=0.5]{./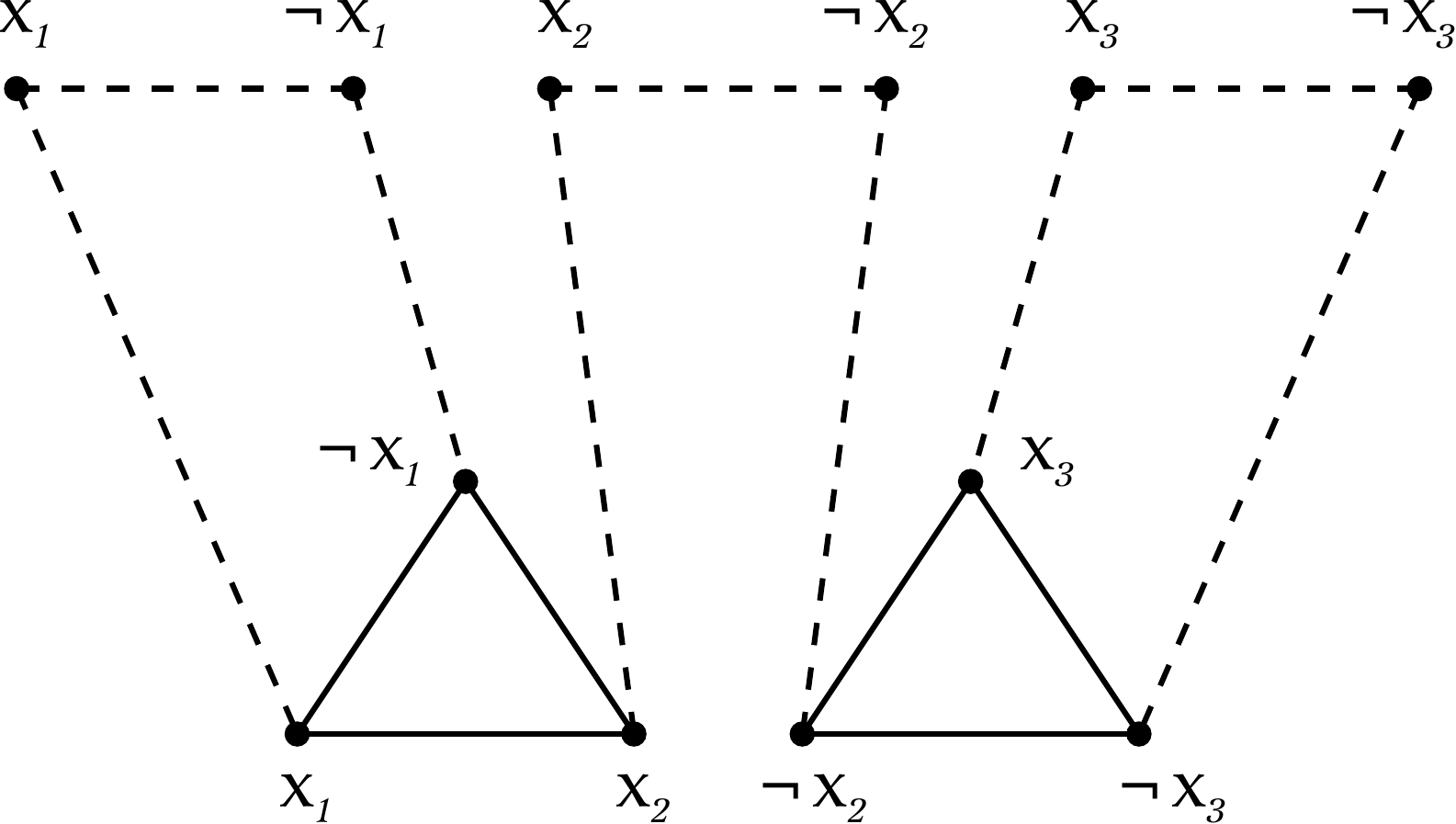}
    \end{center}
    \caption{Abstract form of graph used in reduction}
    \label{fig:basic_red}
  \end{figure}
  
  \begin{lma}
    For any given $\phi$, graph $G_{\phi}$ has a property, that:

    \[
      \phi \in 3\text{-}\mNAESAT \iff G_{\phi} \in \mCOL{3} 
    \]
  \end{lma}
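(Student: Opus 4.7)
The plan is to establish the biconditional directly by exchanging truth assignments and valid colorings in the two directions under the natural dictionary: a variable-gadget vertex colored $r$ represents the literal labeling it being true. The variable super-edge $\{x,\neg x\}$ forces $c(x)\neq c(\neg x)$, which is exactly the consistency $\tau(\neg x)=\lnot\tau(x)$; each cross super-edge $\{\ell,\ell_C\}$ forces $c(\ell_C)\neq c(\ell)$, so a clause-gadget vertex always carries the \emph{opposite} color of the literal it represents.

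For $(\Leftarrow)$, assume $c$ is a valid coloring of $G_\phi$ and define $\tau(x)=\mathrm{true}$ iff $c(x)=r$; the variable super-edges make this extend consistently to all literals. For a clause $C=(u\vee v\vee w)$ the vertices $u_C,v_C,w_C$ form a triangle inside the clause gadget which, by validity of $c$, is not monochromatic. Hence at least one of them is $r$ and at least one is $b$; by the cross super-edge correspondence this pulls back to: at least one of $u,v,w$ is false and at least one is true, so $\tau$ NAE-satisfies every clause.

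For $(\Rightarrow)$, given a NAE-satisfying $\tau$, set $c(x)=r$ iff $\tau(x)=\mathrm{true}$, color $\neg x$ the opposite color of $x$, and for every occurrence of a literal $\ell$ in some clause $C$ color $\ell_C$ the opposite of $c(\ell)$. All super-edges are then satisfied by construction. For each clause triangle, the NAE condition guarantees at least one true and at least one false literal, producing respectively a $b$- and an $r$-vertex, so the triangle is bichromatic. It remains to check that the clause triangles are the \emph{only} triangles in $G_\phi$: splitting the edges into variable super-edges, clause-gadget triangle edges, and cross super-edges, a short case analysis on how many vertices of a hypothetical triangle lie in variable versus clause gadgets rules out every other three-cycle (two variable-gadget vertices can only be joined by the $\{x,\neg x\}$ super-edge, and no clause vertex is adjacent to both endpoints of such a super-edge; distinct clause gadgets share no vertex; etc.).

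The substance of the argument is forced once the red/blue versus true/false correspondence is fixed; the main care will lie in that triangle-enumeration step, and in the separate (deferred) task of implementing super-edges as actual subgraphs whose internal constraints agree with the above color requirements. The present proof treats super-edges abstractly, relying on the paper's subsequent clique-based gadget construction to realize them without introducing unwanted monochromatic triangles outside the clause gadgets.
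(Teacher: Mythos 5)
Your proposal is correct and follows essentially the same route as the paper: the same red/true dictionary, the same use of the NAE condition on the clause triangles in both directions, with super-edges treated abstractly and their realization deferred. The one cosmetic difference is in handling stray triangles: where you enumerate three-cycles to argue the clause triangles are the only ones (a claim that technically fails if a clause repeats a literal, which creates an extra triangle through two cross super-edges), the paper instead observes that any triangle not contained in a single clause gadget must traverse a super-edge and hence cannot be monochromatic --- an argument that absorbs that degenerate case automatically.
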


  \begin{proof}
    First we assume that $\phi \in 3$-\NAESAT and let $\hat{\sigma}(x_1,\ldots,x_n)$ be the truth assignment that certify it.
    Each vertex with non-negated label $x$ in vertex gadgets is colored red if $\sigma(x)=T$ and blue otherwise. Coloring of every other
    vertex is forced by super-edges. Notice that the only place where there could be any monochromatic triangle is in some clause gadget.
    We cannot make that trinagle using mixture of vertices from other clause gadgets or vertex gadgets because we always have to pass
    through a super-edge, hence we change a color of vertices on our path. Now if we assume on the contrary, that some clause
    $C=(u \vee v \vee w)$ form a monochromatic triangle, then either $\sigma(u)=\sigma(v)=\sigma(w)=T$
    or $\sigma(u)=\sigma(v)=\sigma(w)=F$, which gives a contradiction.

    Now let $c$ be the valid coloring of $G_{\phi}$. Since $G_{\phi}$ has no monochromatic triangles, and from the property of super-edge
    we simply assign value $T$ for all variables from vertex gadgets that have color red, and $F$ otherwise. This
    gives an assignment $\hat{\sigma}(x_1,\ldots,x_n)$ that proves that $\phi \in 3$-\NAESAT. To see that, observe that
    every clause corresponding to clause gadget will have at least one literal that is true, and at least one that is false,
    because this clause gadget does not form a monochromatic triangle, which was assumed.
  \end{proof}

  All we have to do now is construct a gadget for super-edge. Such gadget need to have a property, that some
  selected edge $\{x,y\}$ in that gadget will always have $c(x)\neq c(y)$, for any valid coloring $c$ of that gadget
  (a valid coloring also has to exist). An example of the gadget is shown in figure \ref{fig:super_edge1}. On the left picture
  edge $\{x,y\}$ is pointed out. In the middle we have an example of valid coloring, and on the picture on the right we see how
  coloring $\{x,y\}$ in one color gives a contradiction (vertex with a question mark cannot be colored neither red, nor blue).
  The existence of this gadget completes the proof of theorem \ref{thm:red} for $k=3$ (and also theorem \ref{thm:main},
  with additional observation that our reduction is polynomial with respect to size of $\phi$).

  We argue, that even if a super-edge in figure \ref{fig:super_edge1} is enough to verify the
  genuineness of theorem \ref{thm:red} (for $k=3$), it is not ellegant.
  We give a better construction of the gadget that uses a certain coloring property
  of $K_4$. Our method is also easier to generalize for $k > 3$.

  The basic observation is that when we color any two vertices of $K_4$ in one selected color -- lets say red -- then
  the other two vertices will have to be colored blue (otherwise there would be a monochromatic triangle).
  Now if we were to {\em hook} another $K_4$ to those blue vertices (see figure \ref{fig:coloring_property}) then the two
  non-colored vertices would have to be red, and so on, and so on. With this we can create {\em strings} of $K_4$-s.

  \begin{figure}[H]
    \begin{center}
      \includegraphics[scale=0.7]{./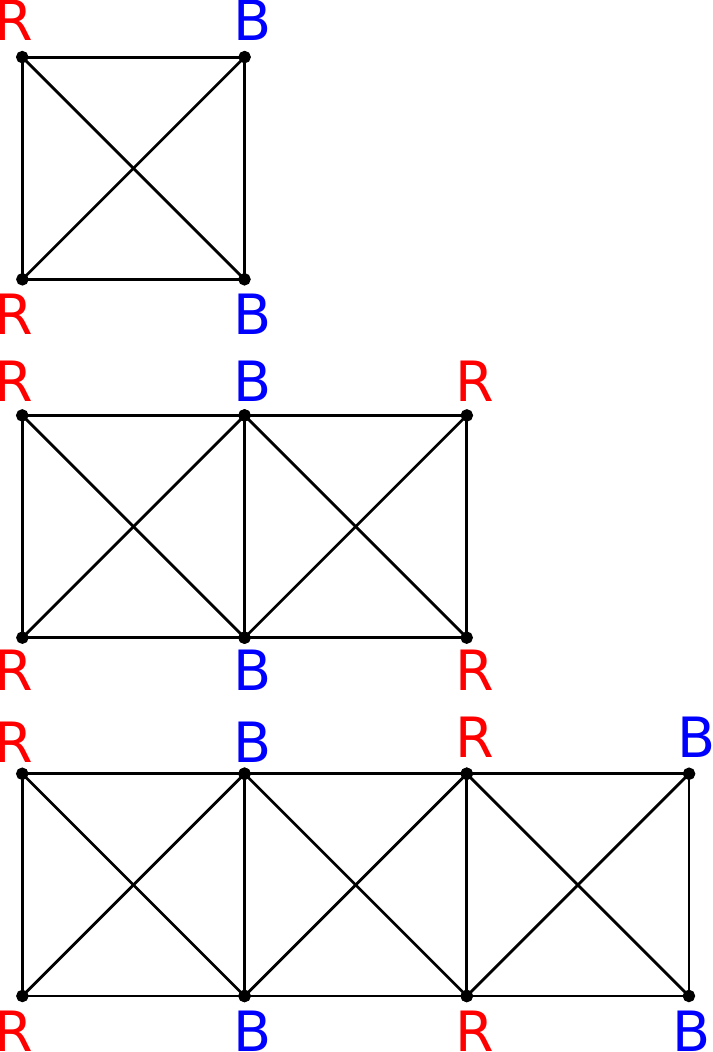}
    \end{center}
    \caption{The {\em strings} created from joining 4-cliques}
    \label{fig:coloring_property}
  \end{figure}

  The trick is to tie two ends of the string together. This will form a {\em loop}.
  It is easy to verify, that loop of length 5 is the desired gadget for super-edge.
  We show its properties in figure \ref{fig:super_edge2}. On the left, the edge $\{x,y\}$
  is pointed out. In the middle picture we give some valid coloring, and the last picture shows how
  coloring $\{x,y\}$ in a single color leads to a contradiction (vertices marked in question marks cannot be colored
  without creating a monochromatic triangle). In fact, we can make an easy observation:

  \begin{obs}
    Any loop of odd length (for lengths greater than 3) can be used as a gadget for super-edge.
  \end{obs}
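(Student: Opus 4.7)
The plan is to verify two properties of a loop $L_\ell$ of $\ell$ joined $K_4$s, for odd $\ell \geq 5$: (i) a valid 2-coloring of $L_\ell$ exists, and (ii) the distinguished edge $\{x,y\}$ satisfies $c(x) \neq c(y)$ in every valid coloring. Together these are exactly the super-edge properties.

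First I would fix notation: let $P_1, \ldots, P_\ell$ be the pairs of vertices around the loop, with $K_4$ number $i$ being the clique on $P_i \cup P_{i+1}$ (indices mod $\ell$), and place the super-edge inside $P_1$, so $\{x,y\} = P_1$. The engine of the argument is the propagation principle already mentioned for strings: inside any $K_4$ avoiding monochromatic triangles, if two vertices share color $\chi$ then the remaining two must both be $\bar\chi$, since otherwise a third $\chi$-vertex together with the first two forms a monochromatic triangle. Hence if $P_i$ is monochromatic in color $\chi$, then $P_{i+1}$ is monochromatic in $\bar\chi$.

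For existence (property (i)), I would color each $P_i$ with one red and one blue vertex. Every $K_4$ along the loop then has exactly two red and two blue vertices, so no three-subset is monochromatic. For the forcing statement (property (ii)), I would argue by contradiction. Assume $c(x)=c(y)$; then $P_1$ is monochromatic, and repeated application of the propagation principle forces $P_i$ to be monochromatic of color depending only on the parity of $i$. At the wrap-around clique $Q_\ell$ on $P_\ell \cup P_1$, odd $\ell$ makes $P_\ell$ and $P_1$ monochromatic of the \emph{same} color, so all four vertices of $Q_\ell$ share a color and $Q_\ell$ contains a monochromatic triangle, contradicting validity.

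The main obstacle is really just the parity bookkeeping at the closing clique, together with understanding why the hypothesis $\ell > 3$ (rather than $\ell \geq 3$) is essential: for $\ell = 3$, the three cliques share every pair of vertices and the underlying graph collapses to $K_6$, which by $R(3,3) = 6$ admits no triangle-free 2-coloring, so property (i) fails; for even $\ell$ the forced alternating colors close up consistently around the loop and property (ii) fails. Only odd $\ell \geq 5$ yields both properties simultaneously, which is precisely the content of the observation.
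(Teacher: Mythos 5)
Your proof is correct and formalizes exactly the argument the paper sketches by pictures: the $K_4$ propagation (``string'') principle plus the parity clash at the closing clique for odd $\ell$, together with the same diagnosis of why $\ell=3$ (collapse to $K_6$) and even $\ell$ fail. The only detail worth adding is the one-line check that for $\ell\geq 5$ every triangle of the loop lies inside a single $K_4$ (vertices in junction pairs $P_i,P_j$ that are not cyclically adjacent span no edges), so your two-red/two-blue existence coloring really does exclude all monochromatic triangles.
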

 
  It turns out that loop of length 3 cannot be used, because it is isomorphic to $K_6$ (and therefore is
  not colorable). Also, coloring loops of even length would not lead to a contradiction, no matter which edge you choose for $\{x,y\}$.
  We leave verification of this statements to the reader. 

  Our symmetric gadget is slightly bigger than the one in figure \ref{fig:super_edge1}. It's
  25 edges and 10 vertices versus 21 edges and 9 vertices. It has been computed (by brute-force), that there
  is no super-edge gadget that uses 8 vertices or less. We did not bother
  to check if there is a gadget with number of edges less than 21. Using the symmetric
  gadget we can compute number of edges and vertices used in entire $G_{\phi}$. Let $E_c$, $E_v$, $E_s$
  denote a set of edges used in all clause gadgets, all vertex gadgets and all super-edge gadgets respectively.
  We define $V_c$, $V_v$, $V_s$ in a similiar fashion. We have:

  \begin{align*}
    |E(G_{\phi})| &= |E_c| + |E_v| + |E_s| \\
    &= 3m + 0 + 25(3m+n) \\
    &= 78m+25n
  \end{align*}

  \begin{align*}
    |V(G_{\phi})| &= |V_c| + |V_v| + |V_s| \\
    &= 3m + 2n + (10-2)(3m+n) \\
    &= 24m+10n
   \end{align*}

  This shows that reduction can be performed in polynomial time (with respect to $n$ and $m$) and therefore
  completes (yet another) proof of theorem \ref{thm:red}. But we can improve the reduction even further and
  push properties of our symmetric gadget to its limit. 

  We will now show what we call {\em The Necklace Reduction}. If we look at a loop of size $l$, we will spot as many as $l$
  candidates for chosing the edge $\{x,y\}$. This is easily seen in figure 5. The symmetry of our gadget guarantees,
  that any edge on the juncture of $K_4$-s can be considered ${x,y}$. But that leaves $l-1$ candidates unused. In nacklace reduction we
  get rid of wasting so many useful edges (to some extent). We simply weave all vertex gadgets on a single loop of length $2n+1$.
  Vertex gadget for variable $x_i$ (for $i=1..n$) now becomes egde on the juncture of $(2i)$-th and $(2i+1)$-th $K_4$-s (numeration
  can start at any arbitrary $K_4$). We leave the rest of reduction the same as before. We have now created a beautiful necklace
  of which example can be seen in figure \ref{fig:necklace} (it uses formula from previous example; some labels were omitted).

  \begin{figure}[H]
    \begin{center}
      \includegraphics[scale=1.0]{./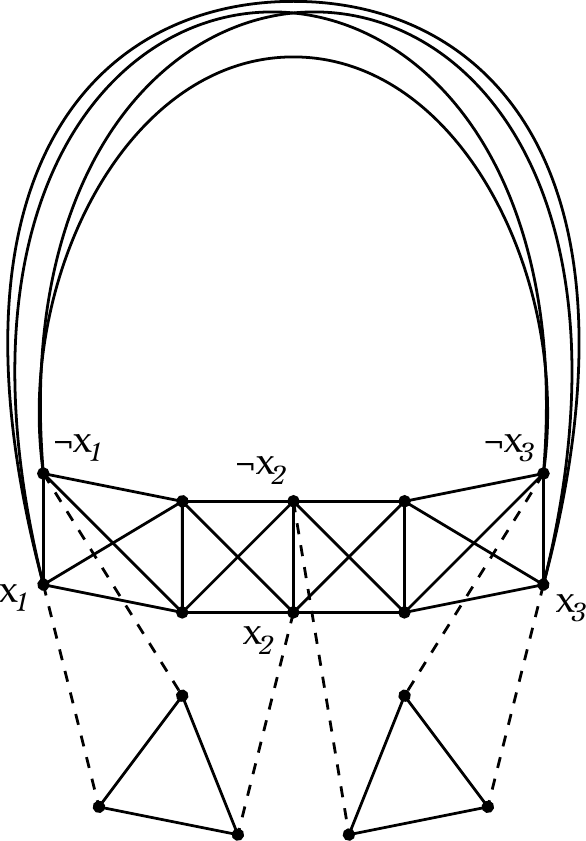}
    \end{center}
    \caption{The Necklace Reduction}
    \label{fig:necklace}
  \end{figure}

  Number of edges and vertices drops down to:

  \[
  |E(G_{\phi})|= 78m + 10n + 5, \quad |V(G_{\phi})|= 27m + 4n + 2
  \]

  We can further improve the necklace by weaving all other super-edges, but
  the construction is rather complicated. Details will be available in extended
  version of this paper.
  
  \begin{figure*}[t]
    \begin{center}
      \includegraphics[scale=0.6]{./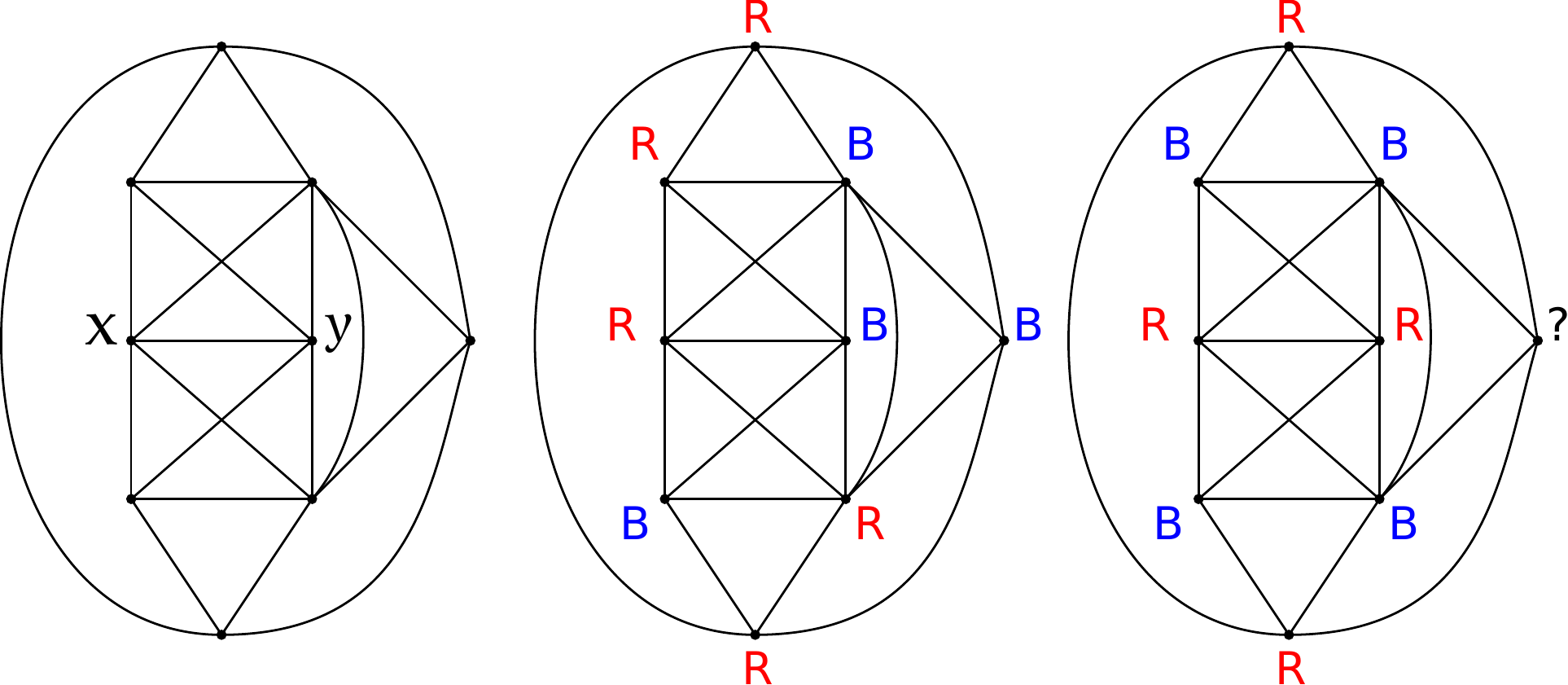}
    \end{center}
    \caption{The super-edge gadget}
    \label{fig:super_edge1}
  \end{figure*}

  \begin{figure*}[t]
    \begin{center}
      \includegraphics[scale=0.6]{./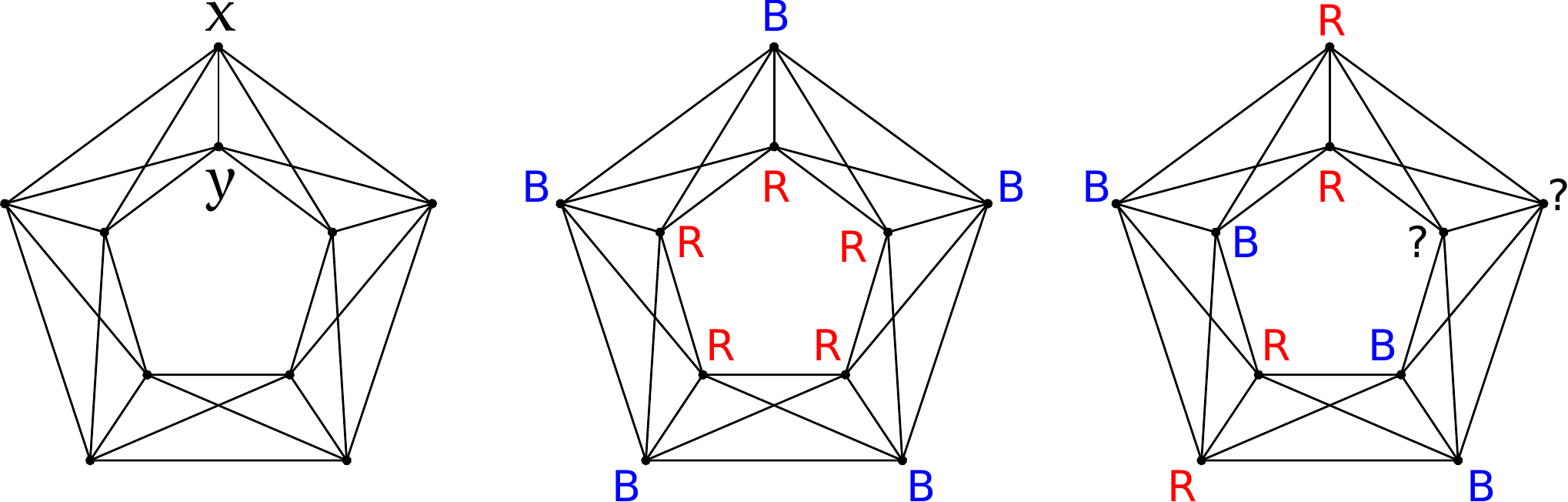}
    \end{center}
    \caption{The symetric super-edge gadget}
    \label{fig:super_edge2}
  \end{figure*}
  
  \section{Two-coloring without monochromatic squares}

  In this section we extend our reduction to cycles of length 4.
  The abstract form of $G_{\phi}$ remains almost the same and the only
  diffrence is that we have squares in place of triangles for clause
  gadgets. In fact we use the similiar graph for higher values of $k$.
  Proof of correctness is the same as before, so we leave the details
  to the reader.

  The most importnant part is to construct a gadget for super-edges. Now,
  we want to create a graph with a selected edge $\{x,y\}$ that there exists
  a valid coloring (without monochromatic squares) and that in every valid coloring
  $c$: $c(x) \neq c(y)$. We use $K_6$ as a building block for the gadget and exploit
  its coloring property.

  \begin{figure}[H]
    \begin{center}
      \includegraphics[scale=0.6]{./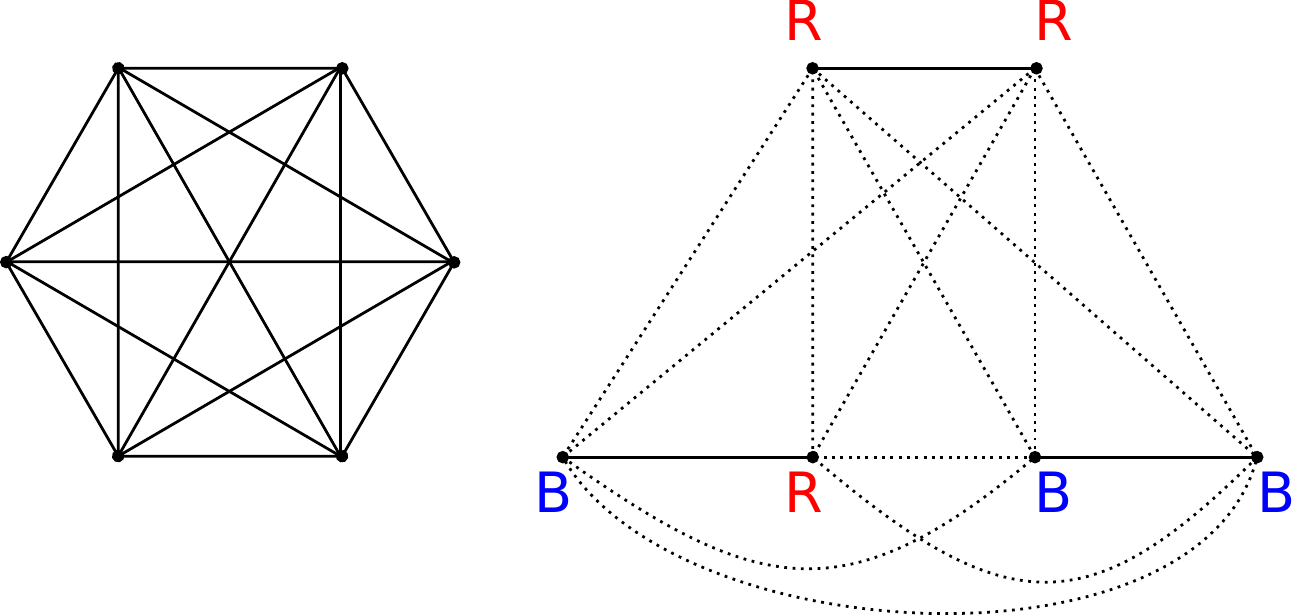}
    \end{center}
    \caption{Coloring property of 6-clique}
    \label{fig:coloring_property_k6}
  \end{figure}

  In figure \ref{fig:coloring_property_k6} on the left we see $K_6$. On the right it is the same $K_6$, but
  with rearranged edges. Three arbitrarily chosen, disjoint edges have been pointed out and streched in three different directions.
  Rest of the edges are less significant so we placed dotted lines in their place. Notice, that when we color vertices of top edge
  in a single color -- let's say red -- then by using easy pigeon hole argument we can conclude, that exactly one of two bottom edges
  will have both of its vertices colored blue.
  
  To further simplify the $K_6$, imagine that the selected edges become nodes and that there are lines between top node
  and two bottom nodes. This creates a reverse v-shaped component. The node which has two different colors associated
  to it, we label as $X$ (see figure \ref{fig:coloring_property_k6_2}).

  \begin{figure}[H]
    \begin{center}
      \includegraphics[scale=1.0]{./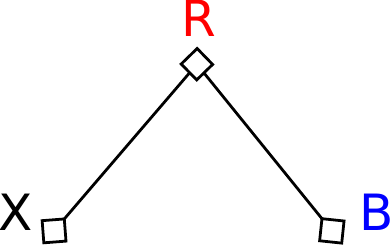}
    \end{center}
    \caption{Simplifing $K_6$}
    \label{fig:coloring_property_k6_2}
  \end{figure}

  Now we present the trick to our gadget. We build a full binary tree of height 4, consisting of reverse v-shaped components.
  It follows from coloring property of $K_6$ discussed before, that if we color root node in red, then there exist a path
  from root to leaf with alternating colors (see figure \ref{fig:super_edge_sq}). Notice the analogy to the construction
  of strings in previous section.

  \begin{figure}[H]
    \begin{center}
      \includegraphics[scale=1.0]{./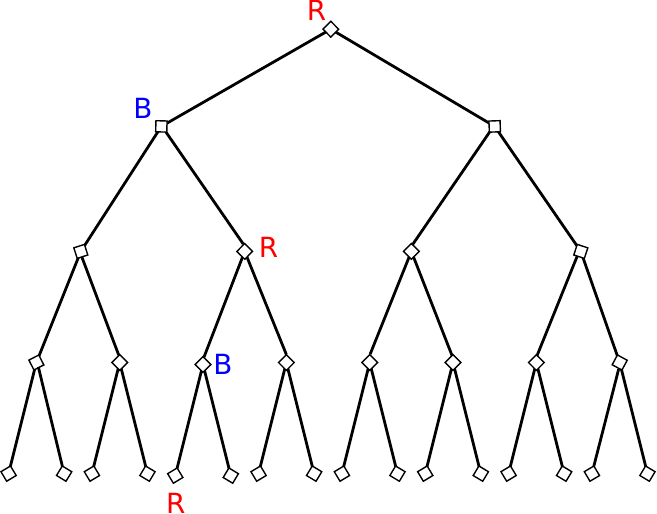}
    \end{center}
    \caption{Super-edge gadget. All leafs are connected to root.}
    \label{fig:super_edge_sq}
  \end{figure}

  To achieve a contradition we connect all leafs to the root using two edges for each leaf in a way that they form a square.
  This completes the construction. We choose root node as $\{x,y\}$.

  Chosing the height 4 for $T$ is not a coincidence, as using any tree of smaller size would either not lead to contradiction 
  (heights 1 or 3) or would not be colorable -- for height 2 we can find a monochromatic square in any coloring. We again
  leave verification to the reader.

  It remains to show that our gadget has a valid coloring. We simply label all nodes by $X$. We now prove that this will
  not create any monochromatic square. There are two places in our gadget that require special attention:

  \begin{itemize}
    \item $\mathcal{P}1$. Connections between inner nodes of the tree, and
    \item $\mathcal{P}2$. Connections between leafs and root.
  \end{itemize}

  Both of them can be handaled in a strightforward way. For the former look at figure \ref{fig:connections_1},
  where we reverse the process of
  $K_6$-simplification for some subtree of $T$. We quickly verify, that there are no monochromatic squares. This is the smallest,
  nontrivial subtree in which there could lurk some hidden monochromatic squares. Thanks to regular structure and symmerty of full
  binary trees, any other combinations of nodes need not be checked. One could use induction for formal proof, but we will leave
  it like this.

  \begin{figure}[H]
    \begin{center}
      \includegraphics[scale=1.0]{./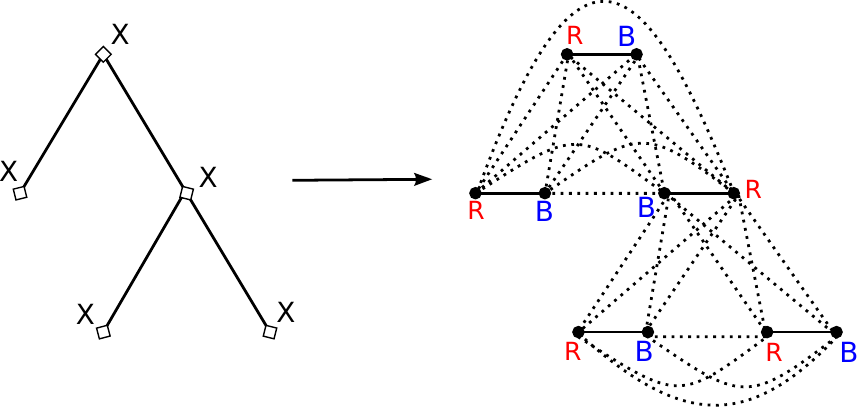}
    \end{center}
    \caption{Connections between inner nodes expanded.}
    \label{fig:connections_1}
  \end{figure}

  $\mathcal{P}2$ couses some minor troubles. Take a look at figure \ref{fig:connections_2}.
  Notice that we found a monochromatic square. This leads
  to conclusion that not every coloring that labels each node by $X$ is valid. We can fix that by coloring both leafs so that
  they form alternating squares with the root (the color is alternating). Now any path that passes from leaf to root have to change
  the color, so there are no more threat to find a monochromatic square.

  \begin{figure}[H]
    \begin{center}
      \includegraphics[scale=1.0]{./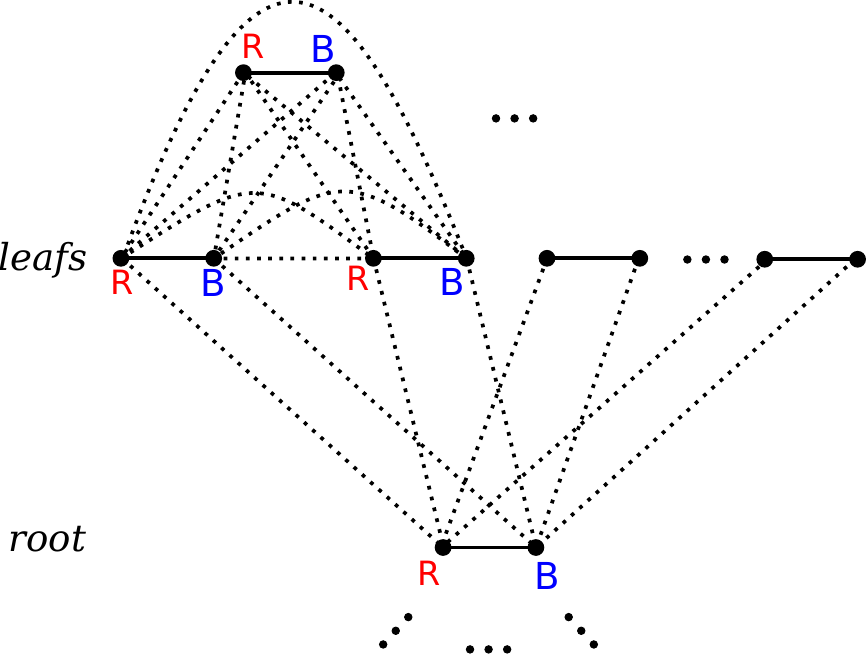}
    \end{center}
    \caption{Connections between leafs and root.}
    \label{fig:connections_2}
  \end{figure}

  This completes the proof of theorem \ref{thm:red} for $k=4$.
  We see that this is a polynomial reduction. For sake of completness lets count
  number of edges and vertices in a single super-edge gadget, and then in entire graph $G_{\phi}$:

  \[
  \text{\#edges-in-gadget} = 243, \quad \text{\#vertices-in-gadget} = 62
  \]

  \begin{align*}
    |E(G_{\phi})| &= |E_c| + |E_v| + |E_s| \\
    &= 4m + 0 + 243(4m+n) \\
    &= 976m + 243n
  \end{align*}

  \begin{align*}
    |V(G_{\phi})| &= |V_c| + |V_v| + |V_s| \\
    &= 4m + 2n + (62-2)(4m+n) \\
    &= 244m + 62n
  \end{align*}

  \section{The general case}

  In this section we finally prove theorem \ref{thm:red} for $k>4$. We do this by expanding the binary tree gadget from last section.
  The tree will grow exponantially with respect to $k$, but remember that $k$ is a constant associated with the problem \COL{k}, so 
  our reduction will still be polynomial in size of $\phi$ (but very, very big). Our goal now is to construct a graph with a selected edge
  $\{x,y\}$, that there exists a valid coloring (without monochromatic cycles of length $k$) and that in every valid coloring $c$:
  $c(x) \neq c(y)$. For now assume that $k$ is even. This will simplify our reasoning.

  First we construct a binary tree $T$ consisting of reverse v-shaped components introduced in previous section. Let
  height of $T$ be $h=4 \lfloor \frac{k-1}{2} \rfloor$. For $i=1..\lfloor \frac{k-1}{2} \rfloor$, we will call all nodes of
  depth $4i$: {\em cycle-inducing} (notice that root and leafs are also cycle-inducing).
  Let $CI$ be the set of all cycle-inducing nodes in $T$. If we color root node in a single
  color -- let's say red -- then there exists a path $P$ from root to some leaf, with alternating colors. Notice that all nodes in
  $P \cap CI$ are now colored red. Those nodes will create a monochromatic cycle of length $k$. To achieve this, we add edges between
  cycle-induced nodes in the following way.

  First, we connect root and leafs just like in previous section. Next, for each cycle-induced
  node of depth $4i$ ($i=1..(\lfloor \frac{k-1}{2} \rfloor-1)$) we conect it to all its descendants on depth $4(i+1)$ (they also belong
  to $CI$). We add edges between them the same way we did with root and leafs. The example of how this produces monochromatic cycle is
  shown on in figure \ref{fig:super_edge_general}. If we take the graph induced by $P \cap CI$, it forms a {\em donut} shown
  in the right picture. We can easily identify a monochromatic cycle of length $k$.

  \begin{figure}[H]
    \begin{center}
      \includegraphics[scale=1.0]{./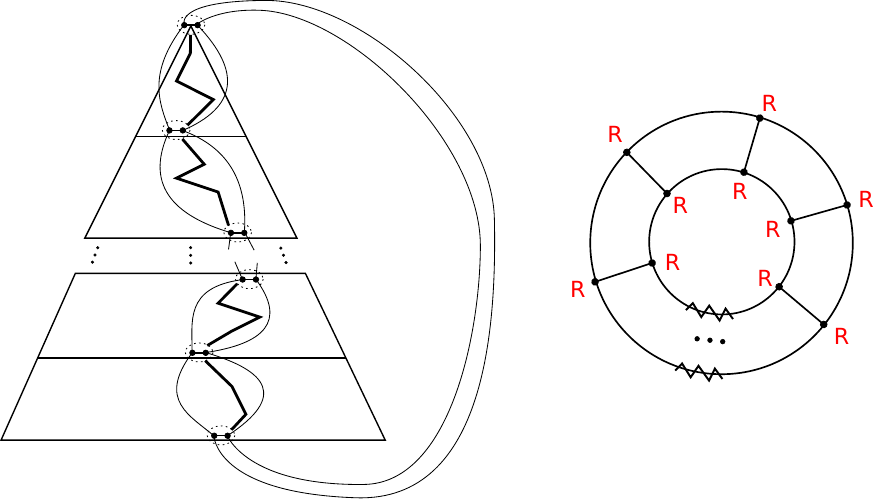}
    \end{center}
    \caption{Super-edge gadget for general case and how to achieve contradiction.}
    \label{fig:super_edge_general}
  \end{figure}
  
  Last thing to do is to prove that there exist a valid coloring of our super-edge gadget. Again we begin with labeling all
  nodes in tree by $X$. We know from previous section how to handle connections between root and leafs -- we have to do the same
  with all cycle-inducing nodes and their first cycle-inducing descendants. This way we will not be able to form a monochromatic cycle
  that passes through two different nodes that are in $CI$. Note that at this point the gadget is correct only when value
  $\lfloor \frac{k-1}{2} \rfloor$ is odd. This is true because of the way we color nodes in $CI$: the coloring of nodes on level
  $4i$ force the coloring on nodes on level $4(i+1)$. This problem can be easily fixed by expanding tree another 4 levels and treating
  nodes at level $h-4$ as {\em dummy nodes}.

  We are left with the case when $k$ is an odd number. Note that the construction above is not working in this case, as we will not
  achieve a contradiction. The fix is as follows: we change connections between leafs and root. Choose one vertex of root node and
  connect all vertices in leafs to it. This creates triangles rather than squares and the {\em donut} now looks like someone has
  taken a bite, but we can now find a monochromatic cycle of length $k$ for all odd numbers (if we color root node in red).
  The valid coloring does not change.

  For sake of completness we count the number of edges and vertices in entire reduction:

  \begin{align*}
    |E(G_{\phi})| &= |E_c| + |E_v| + |E_s| \\
    &= km + 0 + \\
    &+ (15 \cdot 2^{4\lfloor \frac{k-1}{2} \rfloor} - 2^{4\lfloor \frac{k-1}{2} \rfloor +1} \\
    &+ 2 \cdot 2^{4\lfloor \frac{k-1}{2} \rfloor} + 32 \sum_{i=0}^{\lfloor \frac{k-1}{2} \rfloor}2^{4i})(km+n)
  \end{align*}

  \begin{align*}
    |V(G_{\phi})| &= |V_c| + |V_v| + |V_s| \\
    &= km + 2n + (2(2^{4\lfloor \frac{k-1}{2} \rfloor + 1})-2)(km+n)
  \end{align*}

  Thus, we have proved theorem \ref{thm:main}.
  
  \section{Conclusions}
  
  We have shown that using symmetry, one can conceive many interesting combinatorial structures and in
  graph theory there is nothing more symmetric and regular than a clique. The obvious question is: can
  we make the reduction smaller? We have proved that string gadget from section 3 can be used as a tool
  to greatly decrease the number of edges and vertices used, but we do not know if the same can be said about
  tree gadget from sections 4 and 5.
  
  %-------

\end{multicols}

\end{document}